\newcommand{\bI}{ {\boldsymbol I} }
\newcommand{\bw}{ {\boldsymbol w} }
\newcommand{\bx}{ {\boldsymbol x} }
\newcommand{\bX}{ {\boldsymbol X} }
\newcommand{\by}{ {\boldsymbol y} }
\newcommand{\bz}{ {\boldsymbol z} }
\newcommand{\bbeta}{ {\boldsymbol \beta} }
\newcommand{\bgamma}{ {\boldsymbol \gamma} }
\newcommand{\bet}{ {\boldsymbol \eta} }
\newcommand{\btheta}{ {\boldsymbol \theta} }
\newcommand{\bzero}{ {\boldsymbol 0} }
\newcommand{\given}{\,|\,}
\newtheorem{theorem}{Theorem}[section]
\newenvironment{proof}[1][Proof]{\begin{trivlist}
\item[\hskip \labelsep {\bfseries #1}]}{\end{trivlist}}
\newcommand{\qed}{\nobreak \ifvmode \relax \else
      \ifdim\lastskip<1.5em \hskip-\lastskip
      \hskip1.5em plus0em minus0.5em \fi \nobreak
      \vrule height0.75em width0.5em depth0.25em\fi}
\title{Bayesian Causal Inference with Bipartite Record Linkage}
\author{Sharmistha Guha, Jerome P. Reiter and Andrea Mercatanti}
\begin{document}
\maketitle

\begin{abstract}
In many scenarios, the observational data needed for causal inferences are spread over two data files. In particular, we consider scenarios where one file includes covariates and the treatment measured on one set of individuals, and a second file includes responses measured on another, partially overlapping set of individuals. 
In the absence of error free direct identifiers like social security numbers, straightforward merging of separate files is not feasible, so that records must be linked using error-prone variables such as names, birth dates, and demographic characteristics.  Typical practice  in such situations generally follows a two-stage procedure: first link the two files using a probabilistic linkage technique, then make causal inferences with the linked dataset.  This does not propagate uncertainty due to imperfect linkages to the causal inference, nor does it leverage relationships among the study variables to improve the quality of the linkages. We propose a hierarchical model for simultaneous Bayesian inference on probabilistic linkage and causal effects that addresses these deficiencies. 
Using simulation studies and theoretical arguments, we show the hierarchical model can improve the accuracy of estimated treatment effects, as well as the record linkages, compared to the two-stage modeling option. We illustrate the hierarchical model using a causal study of the effects of debit card possession on household spending. 
\end{abstract}

\noindent \emph{Keywords:} Treatment; Matching; Observational; Fusion; Propensity

\section{Introduction}
Often, researchers seek to make causal inferences from variables spread over two datasets.
For example, a social scientist seeks to link records from a survey and an administrative database to assess the effect of some policy  on economic outcomes. 
Similarly, a health researcher seeks to link patients'  electronic health records and Medicare claims data to assess the effect of some medical intervention.  As a final example, a researcher seeks to link records from a study done in the past to records in a current database to make inferences about long-term effects of a treatment, without having to incur the substantial costs of collecting new primary data. 

When perfectly measured, unique identifiers like social security numbers or Medicare patient IDs are available in the two files, it is reasonably straightforward to  link individuals across the files (based on these identifiers). However, often direct identifiers are missing from one or more files, or may not be made available due to privacy restrictions. 
In such situations, data files have to be linked based on indirect identifiers, such as individuals' names, birth dates, addresses, and demographic information. 
These are inherently imperfect, e.g., they could be recorded differently on the files. This introduces uncertainty in linkages that should be propagated to the causal inferences. 

Historically, record linkage and causal inference have been carried out as a two-stage process. The researcher first links records using a probabilistic record linkage model based on indirect identifiers, not taking into account available information on the outcome, covariate or treatment status. Subsequently, the researcher uses the set of linked records in a causal inference procedure. This two-stage approach suffers from two drawbacks.  First, it does not propagate uncertainty from imperfect linkages.   Second, it does not take advantage of relationships among the study variables that could enhance the accuracy of the linkages.  

In this article, we propose a Bayesian hierarchical modeling framework for simultaneous causal inference  and record linkage in observational studies. In particular, we consider scenarios where one file includes the treatment indicator and causally relevant covariates measured on a set of individuals, and the other file includes outcomes measured on a partially overlapping set of individuals.  We follow the Bayesian paradigm for causal inference and posit models for the missing potential outcomes, conditional on the linking status and known covariates.  We couple these outcome models with a probabilistic model for the unknown linkage statuses, i.e., which record pairs are links and which are not.
For the outcome models, we consider both parametric and semi-parametric forms, with the latter based on a regression of the outcome on a flexible function of the propensity scores \citep{rosenbaum1983central}.  For the record linkage model, we use the Bayesian version of the \cite{fellegi1969theory} model proposed by \cite{sadinle2017bayesian}.
As part of the model estimation, we generate plausible values of the missing potential outcomes, which we then use to estimate posterior distributions of causal effects.

Our work adds to a body of literature that uses Bayesian methods for simultaneous record linkage and statistical inference, including regression modeling \citep{gutman2013bayesian, dalzell2018regression} and  population size estimation \citep{domingo2011privacy, tancredi2011hierarchical,  sadinle2018bayesian, tancredi2018unified}. 
It also adds to the literature on non-Bayesian methods for simultaneous record linkage and estimation \citep[e.g., ][]{scheuren1991regression, lahiri2005regression, chipperfield2011maximum,  solomon2019}. None of these Bayesian and and non-Bayesian works consider causal inference as the analysis goal.    \cite{wortman2018simultaneous} introduced the concept of allowing the causal model to inform the linkage model.  
Their (non-Bayesian) approach uses point estimates of average causal effects to determine the thresholds  at which record pairs links are declared links in a \cite{fellegi1969theory} algorithm.  It does not use the causal estimates to determine the record pairs to consider as possible links in the first place, which our Bayesian approach does. 
Further, their approach does not provide uncertainty quantification.

The remainder of the article proceeds as follows. In Section \ref{sec1} we discuss the background, notations and the formulation of the Bayesian hierarchical model. We also present theoretical results arguing for improved inference on record linkage from a joint model compared to a two-stage model. In Section~\ref{post_comp} we  describe posterior computation for the model. In Section \ref{secsim}, we provide results from simulation studies used to assess the effectiveness of the hierarchical model, both for causal inference and for linkage quality. 
In Section \ref{rda} we
apply a hierarchical model to data from an Italian household survey to link records between different files and assess the effect of debit card possession on household spending. Finally, in Section \ref{conclusion} we conclude with an eye towards future work.

\section{Model and Prior Formulation}\label{sec1}
We define a few key concepts and assumptions related to causal inference in Section~\ref{bg}, and describe probabilistic record linkage  in Section~\ref{RL}. We propose the Bayesian hierarchical modeling approach in Section~\ref{RL-CI}.
\subsection{Background and Notation for Bayesian Causal Inference}\label{bg}
We assume a binary treatment, $w_i\in\{0,1\}$, with $w_i=1$ and $w_i=0$ indicating treatment and control assignment to individual $i$, respectively. Let $\bx_i$ be the $p\times 1$ covariate vector and $y_i$ be the (continuous) outcome for individual $i$. 
Each individual is assumed to have two potential outcomes \citep{rubin1974estimating}, one under each value of the treatment. We denote $y_i(1)$ and $y_i(0)$ as the potential outcomes for individual $i$ when $w_i=1$ or $w_i=0$, respectively. The treatment effect for the $i$th individual is given by $T_i=y_i(1)-y_i(0)$.  Other treatment effects can be defined as well, such as $y_i(1)/y_i(0)$, although here we consider effects in the form of $T_i$. 

In reality, for each individual $i$, we can observe only one of $y_i(1)$ and $y_i(0)$; that is, we can observe $y_i=w_i y_i(1)+(1-w_i)y_i(0)$. Bayesian approaches to causal inference essentially treat the unobserved potential outcomes as missing data \citep{rubin2005bayesian, hill2011bayesian, ding2018causal}.  One can impute the missing values repeatedly by sampling from posterior predictive distributions, and use the resulting draws of each $T_i$ to make statements about causal effects.  For example, one can compute the posterior distribution of the average of the causal effects for the $n$ individuals in the study, $\bar{T} = \sum_{i}T_i/n$.

Following convention, we make the following assumptions to facilitate causal inferences.
\begin{enumerate}
\item \emph{Stable unit treatment value assumption} (SUTVA): The SUTVA contains two sub-assumptions, no interference between units (i.e., the treatment applied to one unit does not affect the outcome for another unit) and no different versions of any treatment \citep{rubin1974estimating}.
\item \emph{Strong ignorability}: Strong ignorability stipulates that $(y_i(0),y_i(1))\perp w_i|\bx_i$ for all $i$, which means that there is no unobserved confounding, and that $0<P(w_i=1|\bx_i)<1$.
\end{enumerate}

We also make use of propensity scores \citep{rosenbaum1983central}. For any individual $i$, let the propensity score $e(\bx_i)=P(w_i=1|\bx_i),$ i.e., the probability of being assigned to treatment given the covariate $\bx_i$.  \cite{rosenbaum1983central} show that the treatment assignment is independent of $\bx_i$ given $e(\bx_i)$ under strong ignorability. Typically, propensity scores are estimated using binary regressions of the treatment on causally-relevant covariates.  Analysts can use the resulting estimate in a variety of ways, for example, to create subsets of matched treated and control records \citep{stuart2010matching}.

\subsection{Background and Notation for Probabilistic Record Linkage}\label{RL}

We consider the scenario where we seek to link two files, 
File A and File B, comprising $n_A$ and $n_B$ records, respectively. Without loss of generality, we assume that $n_A\geq n_B$. We suppose each individual or entity is recorded at most once within each file, i.e., each file contains no duplicates. 
Under this setting, the goal of record linkage is to identify which records in File A and File B refer to the same subject. This setting is known as bipartite record linkage \citep{sadinle2017bayesian}.

A corollary to the no-duplicates assumption comes in the form of a maximum one-to-one restriction in the linkage, i.e., a record in one file can be linked with a maximum of one record in the other file. Most commonly, the one-to-one linkage is enforced as a post-processing step after identifying  a set of potentially many-to-one links.  
\citep[e.g., ][]{fellegi1969theory,jaro1989advances,winkler1993improved,belin1995method, larsen2001iterative, herzog2007data}.  
Alternatively, one can embed the bipartite matching constraint into a Bayesian model \citep{fortini2002modelling,tancredi2011hierarchical,larsen2010record,gutman2013bayesian, sadinle2017bayesian, dalzell2018regression}, as we do here. 

Following \cite{sadinle2017bayesian}, we introduce $\bz = (z_1, \dots, z_{n_B})'$ for the records in File B to encode a particular linking status between the two files. Specifically, let 
\begin{align*}
z_j=\left\{\begin{array}{cc}
i, & \mbox{if record $i$ in File A and record $j$ in File B belong to same individual}\\
n_A+j, & \mbox{if record $j$ in File B has no link in File A}
\end{array}\right.
\end{align*}
In the context of bipartite matching, one enforces $z_j\neq z_{j'}$ whenever $j\neq j'$.

Suppose the two files include $F$ variables in common that can be used to link records across the files.  We call these the linking variables or fields. For each pair of records $(i,j)$ in File A$\times$File B, we define a $F$-dimensional vector  
$\bgamma_{ij}=(\gamma_{1,ij}, \dots, \gamma_{F,ij})'$, where $\gamma_{f,ij}$ is a score reflecting the similarity in  field $f$ for the record pair. In this article, 
for  
nominal variables (e.g., age, birth year, sex), we set $\gamma_{f,ij}=1$ when the values of field $f$ for records $i$ and $j$ are equal, and set $\gamma_{f,ij}=0$ otherwise. For string fields (e.g., names) we  take into account partial agreement using the normalized Levenshtein Similarity metric  \citep{winkler1990string}. This metric ranges between $0$ (no agreement) and $1$ (full agreement).  We obtain it using the ``levenshteinSim'' function in the \texttt{RecordLinkage} package in \texttt{R}. We convert the distances into a binary  $\gamma_{f,ij}$ by setting $\gamma_{f,ij} = 1$ when the distance metric exceeds a predetermined threshold (e.g., 0.95), and $\gamma_{f,ij} = 0$ otherwise.  One can convert the metric into a multinomial variable for more refined comparisons \citep{sadinle2018bayesian, wortman2018simultaneous}. 
 
Following \cite{fellegi1969theory} and related literature, we assume that $\bgamma_{ij}$ is a random realization from a mixture of two distributions, one for true links and the other for nonlinks.  We have 
\begin{align}\label{prob_rec_link}
\bgamma_{ij} | (z_j=i)  \stackrel{iid}{\sim} g(\btheta_{m}),\:\:\:
\bgamma_{ij} | (z_j\neq i) \stackrel{iid}{\sim} g(\btheta_{u}),
\end{align}
where $\btheta_m=(\theta_{1,m}, \dots, \theta_{F,m})'$ and $\btheta_u=(\theta_{1,u}, \dots, \theta_{F,u})'$ comprise probabilities of agreement for each field specific to each mixture component. 
Again following typical practice, for computational convenience we assume conditional independence across fields, so that 
\begin{align}\label{eq:link_var}
g(\btheta_m)=P(\bgamma_{ij}|z_j=i)=\prod_{f=1}^F P(\gamma_{f,ij}|z_j=i)=\prod_{f=1}^F \theta_{f,m}^{\gamma_{f,ij}}
(1-\theta_{f,m})^{1-\gamma_{f,ij}}\nonumber\\
g(\btheta_u)=P(\bgamma_{ij}|z_j\neq i)=\prod_{f=1}^F P(\gamma_{f,ij}|z_j\neq i)=\prod_{f=1}^F \theta_{f,u}^{\gamma_{f,ij}}
(1-\theta_{f,u})^{1-\gamma_{f,ij}}.
\end{align}

As a prior distribution on the set of $z_j$, such that $z_j\neq z_{j'}$ for any $j\neq j'$, we follow a construct used in the bipartite record linkage literature, including \cite{fortini2002modelling}, \cite{larsen2010record} and \cite{sadinle2017bayesian}. Specifically, let $I(z_j\leq n_A)\sim Ber(\pi)$, where $\pi$ represents the proportion of matches expected a priori as a fraction of the smaller file.  Here and throughout, $I(\mathcal{E})=1$ when its argument $\mathcal{E}$ is true, and $I(\mathcal{E})=0$ otherwise. We assume $\pi$ is distributed according to a 
Beta($\alpha_{\pi},\beta_{\pi}$) a priori. Marginalizing over $\pi$, the total number of links between File A and File B, given by $n_{AB}(\bz)=\sum_{j=1}^{n_B}I(z_j\leq n_A)$, is distributed according to a Beta-binomial ($n_B,\alpha_{\pi},\beta_{\pi}$) distribution. Conditioning on the number of records in File B with a link, all possible bipartite pairings are taken as equally likely. The final form of the prior distribution of $\bz$, marginalizing over $\pi$, is given by 
\begin{align}\label{prior_z}
P(\bz | a_{\pi}, b_{\pi}) = \frac{(n_A - n_{AB}(\bz))!}{n_A!} \frac{B(n_{AB}(\bz)+\alpha_{\pi}, n_B - n_{AB}(\bz)+\beta_{\pi})}{B(\alpha_{\pi},\beta_{\pi})}. 
\end{align}
The choice of the hyper-parameters $\alpha_{\pi}$ and $\beta_{\pi}$ provides prior information on the number of overlapping records between the two files. We discuss the specific choices of $\alpha_{\pi}$ and $\beta_{\pi}$ in Section~\ref{post_comp}. Finally, the parameters $\theta_{f,m}$ and $\theta_{f,u}$ follow i.i.d. $Beta(a=1,b=1)$ distributions for all $f=1, \dots, F.$  

\subsection{Hierarchical Model for Bayesian Causal Inference and Record Linkage}\label{RL-CI}
To develop a joint model for Bayesian causal inference and record linkage, we specify the distribution of the outcome $y_i$ in File A depending on whether or not it is linked to any covariate and treatment in File B. For linked records, 
we specify the conditional distribution of 
$y_i|(\bx_j,w_j)$ using a regression of our choice. For records without a link, we specify 
a model for the marginal distribution of $y_i$.  We couple these with the model for record linkage in (\ref{prob_rec_link}) -- (\ref{prior_z}).

More specifically, the contribution to the likelihood function from the $i$th record in File A is $f_1(y_i \given \bx_j, w_j, \btheta_c)$ when $z_j = i$, and is $f_2(y_i \given \btheta_d)$ when $z_j \neq i$, for any $j$. Here, $\btheta_c$ and $\btheta_d$ represent parameters in the regression and in the marginal model for outcomes, respectively. Let $\by = (y_1, \dots, y_{n_A})'$ and 
$\bw = (w_1, \dots, w_{n_B})'$ be the $n_A \times 1$ vector of outcomes in File A and $n_B \times 1$ vector of treatment statuses in File B, respectively, and $\bX=[\bx_1':\cdots:\bx_{n_B}']'$ be an $n_B\times p$ dimensional matrix of covariates obtained from File B. The joint likelihood is given by  
\begin{align}\label{jm}
& L(\btheta_d,\btheta_c,\btheta_m,\btheta_u,\bz|\{\bgamma_{ij}:1\leq i\leq n_A, 1\leq j\leq n_B\},\by,\bw,\bX)\nonumber \\
& \propto \prod_{\substack{(i,j):\\ z_j = i}} f_1(y_i \given \bx_j, w_j,\btheta_c) 
\times \prod_{\substack{i:z_j \neq i \\ \forall j}}  f_2(y_i|\btheta_d)\nonumber \\
&\times \prod_{i,j} \left \{\prod_{f=1}^{F} {\theta_{f,m}}^{\gamma_{f,ij}} {(1-\theta_{f,m})}^{1-\gamma_{f,ij}} \right\} ^{I(z_j = i)} 
\times  \left\{\prod_{f=1}^{F} {\theta_{f,u}}^{\gamma_{f,ij}} {(1-\theta_{f,u})}^{1-\gamma_{f,ij}}\right\} ^{I(z_j \neq i)}  \nonumber \\
& \times I(z_j\neq z_{j'},\:\mbox{whenever}\:j\neq j').
\end{align}


To illustrate the potential benefit of joint modeling over two-stage modeling, we 
examine the likelihood ratio that any pair of records is linked versus not linked. 
Under the joint model, the likelihood ratio of $i\sim j$ (i.e., record $i$ is linked to record $j$) and $i\not\sim j$ is given by
\begin{align}\label{joint_ratio}
\mbox{Ratio}_{Joint} = 
\frac{L(\btheta_d,\btheta_c,\btheta_m,\btheta_u,\bz|\{\bgamma_{ij}:1\leq i\leq n_A, 1\leq j\leq n_B\},\by,\bw,\bX, i \sim j)}{L(\btheta_d,\btheta_c,\btheta_m,\btheta_u,\bz | \{\bgamma_{ij}:1\leq i\leq n_A, 1\leq j\leq n_B\},\by,\bw,\bX, i \not \sim j)}
\end{align}
Notably, \eqref{joint_ratio} depends on a contribution to the likelihood from the outcome model. In contrast, the likelihood ratio for linking records in the traditional two-stage model 
only involves the likelihood from the assumed probabilistic record linkage model. For this model, we have 
\begin{align}\label{2stage_ratio}
\mbox{Ratio}_{2 Stage} = \prod_{f=1}^{F} \left(\frac{\theta_{f,m}}{\theta_{f,u}}\right)^{\gamma_{f,ij}}
\left(\frac{1-\theta_{f,m}}{1-\theta_{f,u}}\right)^{1-\gamma_{f,ij}}.
\end{align}
Theorem~\ref{th2.1} offers insight into the behavior of $\mbox{Ratio}_{Joint}$ and $\mbox{Ratio}_{2 Stage}$. 
\begin{theorem}\label{th2.1}
Assuming $\frac{f_1(y_i|\bx_j,w_j,\btheta_c)}{f_2(y_i|\btheta_d)}$ is bounded away from $0$ and $\infty$ in its support, we have\\
(a) $E_{i\sim j}[\mbox{Ratio}_{Joint}]\geq E_{i\sim j}[\mbox{Ratio}_{2 Stage}]$\\
(b) $E_{i\not\sim j}[\mbox{Ratio}_{Joint}]\leq E_{i\not\sim j}[\mbox{Ratio}_{2 Stage}]$.
\end{theorem}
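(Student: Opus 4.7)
The plan is to reduce $\mbox{Ratio}_{Joint}$ to $\mbox{Ratio}_{2 Stage}$ multiplied by a single additional factor built from the outcome model, then exploit the conditional independence (given the linking status) of the outcome and the comparison vector to factor the expectations, and finish with a standard $\chi^2$-divergence / Cauchy--Schwarz bound.

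First I would establish the key algebraic identity. Fix the entire configuration $\bz$ and perturb only the coordinate $z_j$ between the value ``$i$'' and a value indicating ``$i$ is unmatched by $j$''. In the joint likelihood \eqref{jm}, the only factors affected by this one-coordinate flip are (i) the outcome contribution for record $i$, which toggles between $f_1(y_i\mid \bx_j,w_j,\btheta_c)$ and $f_2(y_i\mid\btheta_d)$, and (ii) the $(i,j)$ pair's record-linkage contribution, which toggles between $\prod_f \theta_{f,m}^{\gamma_{f,ij}}(1-\theta_{f,m})^{1-\gamma_{f,ij}}$ and the analogous product in $\theta_{f,u}$. All other factors---every other pair $(i',j')$, every other outcome, and the prior $P(\bz)$ under a single-coordinate flip---cancel between numerator and denominator. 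Hence
\begin{equation*}
\mbox{Ratio}_{Joint}\;=\;\frac{f_1(y_i\mid \bx_j,w_j,\btheta_c)}{f_2(y_i\mid\btheta_d)}\cdot\mbox{Ratio}_{2 Stage}.
\end{equation*}

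Second I would invoke the model's built-in conditional independence: given the status of the pair $(i,j)$, the outcome $y_i$ is drawn from the outcome component (either $f_1$ or $f_2$) while the comparison vector $\bgamma_{ij}$ is drawn from $g(\btheta_m)$ or $g(\btheta_u)$, independently of $y_i$. This lets me factor $E_{\cdot}[\mbox{Ratio}_{Joint}]=E_{\cdot}[f_1/f_2]\cdot E_{\cdot}[\mbox{Ratio}_{2 Stage}]$ under either hypothesis. The boundedness assumption on $f_1/f_2$ is exactly what makes every expectation finite. Then I would evaluate $E[f_1/f_2]$ under each hypothesis. Under $i\sim j$, $y_i$ has density $f_1$, so by Cauchy--Schwarz (equivalently, $1+\chi^2(f_1\,\|\,f_2)$),
\begin{equation*}
E_{i\sim j}\!\left[\tfrac{f_1(y_i)}{f_2(y_i)}\right]=\int\frac{f_1(y)^2}{f_2(y)}\,dy\;\ge\;\left(\int f_1(y)\,dy\right)^{\!2}=1.
\end{equation*}
Since $E_{i\sim j}[\mbox{Ratio}_{2 Stage}]>0$, multiplying yields (a). Under $i\not\sim j$, $y_i$ has density $f_2$, so $E_{i\not\sim j}[f_1/f_2]=\int f_1=1$, which gives equality in (b)---consistent with, and stronger than, the stated $\leq$.

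The main obstacle I anticipate is the bookkeeping at the first step: making precise what it means to flip $z_j$ while holding ``everything else'' fixed, especially the possibility that record $i$ could in principle link to some $j'\neq j$. A clean way to handle this is to condition on $\bz_{-j}$ and further on $i$ being otherwise unmatched, so that the flip is genuinely local and the cancellation is unambiguous; the factorization and the $\chi^2$-divergence bound then go through verbatim. A secondary subtlety, worth a sentence in the final write-up, is that the factored expectation $E[f_1/f_2]\cdot E[\mbox{Ratio}_{2 Stage}]$ should be interpreted with the parameters $(\btheta_c,\btheta_d,\btheta_m,\btheta_u)$ treated as fixed inside the expectation over data, which matches how the likelihood ratio is defined in \eqref{joint_ratio}--\eqref{2stage_ratio}.
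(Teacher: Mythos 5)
Your proposal is correct, and it shares the paper's first step---the reduction $\mbox{Ratio}_{Joint}=\frac{f_1(y_i\mid\bx_j,w_j,\btheta_c)}{f_2(y_i\mid\btheta_d)}\cdot\mbox{Ratio}_{2\,Stage}$, obtained by cancelling every factor of the joint likelihood untouched by the flip of $z_j$---but it diverges at the key inequality. The paper passes to logarithms and identifies $E[\log\mbox{Ratio}_{Joint}]-E[\log\mbox{Ratio}_{2\,Stage}]$ with a signed Kullback--Leibler divergence between $f_1$ and $f_2$ ($\int f_1\log(f_1/f_2)\ge 0$ under $i\sim j$ and $\int f_2\log(f_1/f_2)\le 0$ under $i\not\sim j$); strictly speaking, what the paper's argument establishes is the ordering of the expected \emph{log}-ratios. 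You stay on the raw scale, use the conditional independence of $y_i$ and $\bgamma_{ij}$ given the link status to factor $E[\mbox{Ratio}_{Joint}]=E[f_1/f_2]\cdot E[\mbox{Ratio}_{2\,Stage}]$, and then bound $E_{i\sim j}[f_1/f_2]=\int f_1^2/f_2\ge 1$ via Cauchy--Schwarz (the $\chi^2$-divergence) and compute $E_{i\not\sim j}[f_1/f_2]=\int f_1=1$ exactly. Your route has the advantage of proving the theorem as literally stated and of making transparent where the boundedness hypothesis enters (finiteness of $\int f_1^2/f_2$); the paper's log-scale route has the compensating advantage of yielding a \emph{strict} separation in both directions whenever $f_1\neq f_2$ (both KL divergences are then positive), whereas on the raw scale part (b) collapses to an equality and so conveys no actual sharpening for non-links. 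Both arguments rest on the same implicit conditions---parameters held fixed inside the expectation, and record $i$ otherwise unmatched so that the flip of $z_j$ is local---which you are right to flag.
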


\begin{proof}
The likelihood ratio under the joint model, $\mbox{Ratio}_{Joint}$, can be expressed as
\begin{align}
& \frac{\prod_{\substack{(k,l):z_k = l,k \neq j\\l \neq i}} f_1(y_l | x_k, w_k, \btheta_c) \times f_1(y_i | x_j, w_j, \btheta_c) \times \prod_{l:z_k \neq l} f_2(y_l | \btheta_d)}
{\prod_{\substack{(k,l):z_k = l,k \neq j\\l \neq i}} f_1(y_l | x_k, w_k, \btheta_c) \times f_2(y_i | \btheta_d) \times \prod_{l:z_k \neq l} f_2(y_l | \btheta_d)}\nonumber\\
&\qquad\times\frac{\prod_{f=1}^{F} {\theta_{f,m}}^{\gamma_{f,ij}} {(1-\theta_{f,m})}^{1-\gamma_{f,ij}}}
{\prod_{f=1}^{F} {\theta_{f,u}}^{\gamma_{f,ij}} {(1-\theta_{f,u})}^{1-\gamma_{f,ij}}}\nonumber\\
&= \frac{f_1(y_i | x_j, w_j, \btheta_c)}{f_2(y_i | \btheta_d)} 
\prod_{f=1}^{F} \left(\frac{\theta_{f,m}}{\theta_{f,u}}\right)^{\gamma_{f,ij}}
\left(\frac{1-\theta_{f,m}}{1-\theta_{f,u}}\right)^{1-\gamma_{f,ij}}.
\end{align}

The likelihood ratio under the two-stage model, $\mbox{Ratio}_{2 Stage}$,  is given by \eqref{2stage_ratio}, which we abbreviate as 
$h(\theta_{f,m},\theta_{f,u})$. 
Thus,
$log(\mbox{Ratio}_{Joint}) = log(h(\theta_{f,m},\theta_{f,u})) 
+ log\left[\frac{f_1(y_i | x_j, w_j, \btheta_c)}{f_2(y_i|\btheta_d)}\right]$,
and 
$log(\mbox{Ratio}_{2 Stage}) = log(h(\theta_{f,m},\theta_{f,u})).$ 
Therefore, we have 
\begin{align*}
&E_{i \sim j}[log(\mbox{Ratio}_{Joint}) - log(\mbox{Ratio}_{2 Stage})] \\
&= \int \int \left[\prod_{f=1}^{F} {\theta_{f,m}}^{\gamma_{f,ij}} {(1-\theta_{f,m})}^{1-\gamma_{f,ij}}\right] f_1(y_i | x_j, w_j, \btheta_c)\:\: log\left[\frac{f_1(y_i | x_j, w_j, \btheta_c)}{f_2(y_i|\btheta_d)}\right] \geq 0
\end{align*}
as a consequence of this expression being a Kullback-Leibler divergence. And, we have 
\begin{align*}
&E_{i \not \sim j}[log(\mbox{Ratio}_{Joint}) - log(\mbox{Ratio}_{2 Stage})]\\
&= \int \int \left[\prod_{f=1}^{F} {\theta_{f,u}}^{\gamma_{f,ij}} {(1-\theta_{f,u})}^{1-\gamma_{f,ij}}\right] f_2(y_i|\btheta_d)\:\: log\left[\frac{f_1(y_i | x_j, w_j, \btheta_c)}{f_2(y_i|\btheta_d)}\right] \leq 0,
\end{align*}
where the last inequality follows from the fact that the expression is $(-1)$ times the Kullback-Leibler divergence between the two densities $f_1$ and $f_2$.
\end{proof}
Theorem \ref{th2.1} indicates that the likelihood ratio for the joint model is more extreme than the likelihood ratio for the two stage model, which facilitates more accurate identification of a link or no link between records $i$ and $j$. 

\subsubsection{Outcome Models}\label{sec:causla_model}
Naturally, one should specify $f_1(y_i|w_j,\bx_j,\btheta_c)$ and $f_2(y_i|\btheta_d)$ to describe the distribution of outcomes as faithfully as possible.  In this article, we specify models for $y_i\in\mathcal{R}$ and assume  $f_2(y_i|\btheta_d)=N(y_i|\mu_1,\sigma_1^2)$; 
setting a more complicated distributional form for $f_2$ or extending to a categorical $y_i$ is relatively straightforward. 
For $f_1(y_i|w_j,\bx_j,\btheta_c)$, we use   
a general mean-zero additive error form, 
\begin{align}\label{mean0}
y_i = m(\bx_j, w_j) + \epsilon_{i,j},\:\:\:\epsilon_{i,j} \sim N(0,\sigma^2).
\end{align}
The specification for $m(\bx_j,w_j)$ could be a simple linear form, although often in observational studies it is advantageous to use more flexible modeling \citep{hill2011bayesian}.  

We use a computationally favorable yet flexible specification for $m(\bx_j,w_j)$.  In particular, we assume  
\begin{align}\label{mean_causal}
m(\bx_j,w_j)= m_1(\hat{e}(\bx_j)) + m_2(\hat{e}(\bx_j)) w_j,
\end{align}
where $\hat{e}(\bx_j)$ is the estimated propensity score. In the simulations of Section~\ref{secsim}, we use $\hat{e}(\bx_j)=\textcolor{blue}{\rho}^{-1}(\bx_j'\hat{\bet})$, where $\rho(\cdot)$ is the logit link function and $\hat{\bet}$ is the maximum likelihood estimate of $\bet$ obtained by fitting a logistic regression of $w_j$ on $\bx_j$ for all $j \in \mathcal{B} = \left\{j: z_j=i, \:\mbox{for some}\: i, 1\leq i\leq n_A\right\}$. 

To afford model flexibility, we propose a semi-parametric choice for $m_1(\cdot)$ and $m_2(\cdot)$ 
using penalized splines \citep{ruppert2003semiparametric}. Let $\kappa_1<\kappa_2<\cdots<\kappa_m$ be a set of $m$ fixed knot points in $(0,1)$. The functions $m_1(\cdot)$ and $m_2(\cdot)$ are represented using spline basis functions,
\begin{align}\label{nonpar}
m_1(\hat{e}(\bx_j)) &=\beta_0+\sum_{l_1=1}^{s}\beta_{l_1}\hat{e}(\bx_j)^{l_1}+\sum_{l_2=1}^{m}\beta_{s+l_2}(\hat{e}(\bx_j)-\kappa_{l_2})_{+}^{l_2}\nonumber\\
m_2(\hat{e}(\bx_j)) &=\sum_{l_1=1}^{s}\gamma_{l_1}\hat{e}(\bx_j)^{l_1}+\sum_{l_2=1}^{m}\gamma_{s+l_2}(\hat{e}(\bx_j)-\kappa_{l_2})_{+}^{l_2}.
\end{align}
So, the parameters are $\btheta_c=(\beta_0, \beta_1,  \dots,\beta_{s},\beta_{s+1},\dots, \beta_{s+m},\gamma_1, \dots, \gamma_s,\gamma_{s+1},\dots,\gamma_{s+m},\sigma^2)'.$
 This modeling framework is motivated by the penalized spline regression approaches in the Bayesian survey sampling literature \citep{zheng2003penalized,zheng2005inference}, with survey weights replaced by propensity scores.
 
 We suggest placing a large number of knots to estimate the semi-parametric functions accurately. However, even a moderately large choice of $m$ may result in model over-fitting. We therefore regularize the spline coefficients $\beta_{s+1}, \dots, \beta_{s+m}$ and $\gamma_{s+1}, \dots, \gamma_{s+m}$. 
 using Bayesian Lasso shrinkage priors. 
 Following \cite{park2008bayesian}, a scale-mixture representation of the Bayesian Lasso shrinkage prior is given by  
\begin{align}
&\beta_{s+k}|\tau_{1,k}^2\sim N(0,\sigma^2\tau_{1,k}^2), \:\:\:\gamma_{s+k}|\tau_{2,k}^2\sim N(0,\sigma^2\tau_{2,k}^2)\nonumber\\
&\tau_{1,k}^2\stackrel{iid}{\sim} Exp(\lambda_1^2),\:\:\tau_{2,k}^2\stackrel{iid}{\sim} Exp(\lambda_2^2),\:\:k=1, \dots, m\nonumber\\
& \lambda_1^2\sim Gamma(r_1,\delta_1),\:\:\lambda_2^2\sim Gamma(r_2,\delta_2).
\end{align}
 We assign  $\beta_1,...,\beta_s\stackrel{i.i.d.}{\sim}N(0,1)$ and $\gamma_1,...,\gamma_s\stackrel{i.i.d.}{\sim} N(0,1)$ priors. We also assign $\beta_0\sim N(0,1)$ and $\sigma^2\sim IG(a_{\sigma},b_{\sigma})$ priors. The prior specification is completed by setting prior distributions on $\btheta_d=(\mu_1,\sigma_1^2)'$ as $\mu_1 \sim N(0,1)$ and $\sigma_1^2 \sim IG(a_{\sigma_1},b_{\sigma_1})$ a priori.  We discuss the choice of hyper-parameters further in Section~\ref{post_comp}.

For comparisons, we also consider a parametric outcome regression.
In this model, we set $m_1(\hat{e}(\bx_j))=\beta_0+\hat{e}(\bx_j)\beta_1$ and $m_2(\hat{e}(\bx_j))=\alpha$, so that $\btheta_c=(\beta_0,\beta_1,\alpha,\sigma^2)'$. 
We assign $\bbeta=(\beta_0,\beta_1)'$ and $\alpha$ a multivariate normal prior distribution.  We let  $(\bbeta,\alpha)'\sim N(\bzero,\bI)$, and let $\sigma^2$ follow an IG(\textcolor{blue}{$a_{\sigma},b_{\sigma}$}) prior.

\section{Posterior Computation}\label{post_comp}
Incorporating the prior information, the full posterior for the model with the semi-parametric outcome regression is proportional to
\begin{align}
& L(\btheta_d,\btheta_c,\btheta_m,\btheta_u,\bz|\{\bgamma_{ij}:1\leq i\leq n_A, 1\leq j\leq n_B\},\by,\bw,\bX)\times P(\bz \given {\alpha}_{\pi} ,{\beta}_{\pi})\nonumber\\
&\times \prod_{f=1}^F \theta_{f,m}^{a-1} (1 - \theta_{f,m})^{b-1}
 \times \prod_{f=1}^F \theta_{f,u}^{a-1} (1 - \theta_{f,u})^{b-1} \times IG(\sigma^2|a_{\sigma},b_{\sigma})\times N(\beta_0|0,1)\nonumber\\
& \times\prod_{k=1}^s N((\beta_k,\gamma_k)'|0,I)
 \times\prod_{k=1}^m \left[N(\beta_{k+s}|0,\sigma^2\tau_{1,k}^2)\times N(\gamma_{k+s}|0,\sigma^2\tau_{2,k}^2)\right]\nonumber\\
&\times \prod_{k=1}^m \left[ Exp(\tau_{1,k}^2|\lambda_1^2)\times Exp(\tau_{2,k}^2|\lambda_2^2)\right]
\times Gamma(\lambda_1^2|r_1,\delta_1)\times Gamma(\lambda_2^2|r_2,\delta_2)\nonumber\\
&\times N(\mu_1|0,1) \times IG(\sigma_1^2|a_{\sigma_1},b_{\sigma_1}).
\end{align}
Similarly, the full posterior for the model with the parametric outcome regression is proportional to
\begin{align}
& L(\btheta_d,\btheta_c,\btheta_m,\btheta_u,\bz|\{\bgamma_{ij}:1\leq i\leq n_A, 1\leq j\leq n_B\},\by,\bw,\bX)  \times P(\bz \given {\alpha}_{\pi} ,{\beta}_{\pi})\nonumber\\
&\times \prod_{f=1}^F \theta_{f,m}^{a-1} (1 - \theta_{f,m})^{b-1}
 \times \prod_{f=1}^F \theta_{f,u}^{a-1} (1 - \theta_{f,u})^{b-1} \times IG(\sigma^2|a_{\sigma},b_{\sigma})\times N\left((\bbeta,\alpha)'|\bzero,\bI\right)\nonumber\\
 &\times N(\mu_1|0,1) \times IG(\sigma_1^2|a_{\sigma_1},b_{\sigma_1}),
\end{align}
Summaries of these posterior distributions cannot be computed in closed form.  Thus, posterior computation proceeds through Markov chain Monte Carlo (MCMC) algorithms.  In each iteration, we update the outcome regression parameters using the current set of  model-determined links. We also re-estimate 
propensity scores based only 
on those records in File B  that have been linked to File A in that iteration. We re-estimate propensity scores since these records correspond to the linked dataset on which causal inference is performed. 
The full posterior conditionals can be found in the supplementary material.


For either model, we let the MCMC chain run  until apparent convergence ($2000$ iterations in our simulations) and discard an appropriate burn-in (the first $1500$ iterations in our simulations).  
Let $z_j^{(1)},  \dots, z_j^{(L)}$ be the $L$ post burn-in MCMC iterates of $z_j$, where $j=1, \dots, n_B$. For each $j$, we empirically estimate $P(z_j=q|-)$ using the proportion of post burn-in samples where $z_j$ takes the value $q$, i.e., $\hat{P}(z_j=q|-)=\#\{l:z_j^{(l)}=q\}/L$, for $q\in\mathcal{J}_j=\{1, \dots, n_A,n_A+j\}$.  The most likely link for record $j$ in File B is the record $q$ satisfying  $1\leq q^*=arg\max_{q\in\mathcal{J}_j}\hat{P}(z_j=q|-)\leq n_A$. We 
denote this record as $\hat{z}_j=q$. When $q^*=n_A+j$, we declare it most likely that record $j$ does not have a link in File A. The posterior distributions of each $z_j$ characterize the uncertainties associated with the links. 

For posterior inferences on causal effects, we define the average treatment effect for the linked cases, which we abbreviate as ATEL. 
\begin{align}\label{ATE}
\mbox{ATEL}=\frac{\sum\limits_{i\in\mathcal{A}}(y_i(1)-y_i(0))}{n_{AB}}=\frac{\sum\limits_{i\in\mathcal{A}}T_i}{n_{AB}},
\end{align}
where $\mathcal{A}=\{i:z_j=i,\:\mbox{for some j} \leq n_A\}$, and $n_{AB}$ denotes the cardinality of $\mathcal{A}$. In expectation, the ATEL equals  the usual average treatment effect for the records in File A when the linked records do not differ systematically from the full sample of File A; that is, linkages are independent of the potential outcomes.   As we do not know for certain which record pairs are true links, we estimate the posterior distribution of the ATEL by computing \eqref{ATE} in each iteration of the MCMC sampler.  


To draw posterior inferences on the ATEL,
define $y_{miss,i}=(1-w_i)y_i(1)+w_i y_i(0)$ as the counterfactual outcome for the $i$th record in File A, where $i=1, \dots, n_A$. At the $l$-th post burn-in iteration, we impute the counterfactual outcomes $y_{miss,i}^{(l)}$ for all linked individuals, i.e., all  $i\in\mathcal{A}^{(l)}=\{i:z_j^{(l)}=i,\:\mbox{for some j} \leq n_A\}$, from their posterior predictive distributions, 
\begin{align}\label{y:ppd}
p(y_{miss,i}|y_1, \dots, y_{n_A},z_j=i)=\int f_1(y_{miss,i}|w_i = 1-w_j, \bx_j,\btheta_c)p(\btheta_c|y_1, \dots, y_{n_B})d\btheta_c.
\end{align}
In \eqref{y:ppd}, we sample $y_{miss,i}^{(l)}$ using a treatment indicator that is opposite what is observed for its linked record, i.e., we set $w_i  = (1-w_j)$. 
We obtain the $l$-th post burn-in iterate for the ATEL using (\ref{ATE}) with $(y_i, y_{miss,i}^{(l)})$ over all  $i\in\mathcal{A}^{(l)}$. 

In the simulations in Section~\ref{secsim} and analyses in Section~\ref{rda}, we choose the values of the hyperparameters as $a_{\sigma}=1$, $b_{\sigma}=1$, $\alpha_{\pi}=1$, $\beta_{\pi}=1$,  $a_{\sigma_1}=1$, $b_{\sigma_1}=1$, $r_1=r_2=\delta_1=\delta_2=1$. 
Moderate perturbations of the hyperparameter values lead to practically indistinguishable results.

\section{Simulation Studies}\label{secsim}
We carry out simulation studies to assess the performance of the Bayesian hierarchical model, which for brevity we refer to as the joint model. We consider simulation scenarios in which we vary (a) the proportion of records in common between the two files and (b) the data generation model for the outcomes. 
Within these, we consider simulation scenarios with the correctly specified and a mis-specified outcome regression model. Finally, we present results from a simulation with missing outcome values. 

\subsection{Simulated Data Generation}\label{datgen}
We work with the  RLdata10000 data from the R package, \texttt{RecordLinkage} \citep{sariyar2010recordlinkage}. These data comprise an artificial population of $10000$ records with first names, last names, birth years and birth dates. Among these, there are $1000$ individuals whose values of these variables have been duplicated and then randomly perturbed, introducing errors into these potential linking variables. 

The RLdata10000 data do not include covariates, treatments, or outcomes.  Thus, we generate values of these for each of the $9000$ unique individuals in the RLdata10000 file.  For each individual $j$, we generate $p=2$ covariates, $x_{1,j}$ and  $x_{2,j}$, sampled  i.i.d. from standard normal distributions. 
We generate each individual's binary treatment assignment $w_j$ from a Bernoulli distribution with probability given by 
\begin{align}\label{true_prop}
e(\bx_j) = P(w_j = 1|\bx_j) = \frac{e^{\alpha_0 + \sum_{l=1}^p\alpha_l x_{l,j} }}{(1+e^{\alpha_0 + \sum_{l=1}^p\alpha_l x_{l,j} })},
\end{align}
where $(\alpha_0, \alpha_1, \alpha_2) = (1, 1.5, -1)$.
We generate each individual's outcome $y_j$ from 
\begin{align}\label{eq:true_causal}
y_j = m_1^0(\bx_j) +   m_2^0(\bx_j)w_j + \epsilon_{j}, \:\:\:\:\:\:\epsilon_{j} \sim N(0,1),
\end{align}
where the superscript $0$ indicates the true data generating mechanism.
We examine results for two choices of $(m_1^0, m_2^0)$.  The first uses  linear functions in the propensity score: $m_1^0(\bx_j)=1+2 e(\bx_j)$ and $m_2^0(\bx_j)=4$.  We call this Scheme L.  The second uses nonlinear functions in the propensity score:  $m_1^0(\bx_j)= 5 - 1.5 \:e(\bx_j)$ and $m_2^0(\bx_j)=\exp(-0.8 + 2.6 \: e(\bx_j))$.  We call this Scheme N. 

We construct File A and File B by putting subsets of these records into two files.  For any record, File A includes the outcome information, while File B includes the covariate and treatment information; both files include the imperfect linking variables. For ease of simulation, we set the sizes of File A and File B to be $n_A=n_B=1000$. 

In any simulation, we randomly sample a subset of the $1000$ individuals with duplicates. We put these records in File A and their duplicates in File B.  The number of these overlapping individuals is denoted by $O_{AB}$, 
which is varied to be 100, 500, or 900. For the remaining $(n_A-O_{AB})$ records in File A, we randomly choose $(n_A-O_{AB})$ records from the $8000$ individuals without duplicates, discarding their treatments and covariates and keeping their outcomes and the linking variables. To ensure that the non-overlapping records of File A and File B correspond to different individuals, we set aside these $(n_A-O_{AB})$ records from the $8000$ records. To add the remaining $(n_B-O_{AB})$ records to File B, we randomly choose $(n_B-O_{AB})$ records from the remaining $(8000-n_A+O_{AB})$ records, discarding their outcomes and keeping the treatments, covariates, and linking variables. 


For comparisons, we consider the performance of
two alternatives.  In the two-stage model,  we first link records using the posterior mode of each $z_j$ after fitting the bipartite Bayesian record linkage method as described in Section~\ref{RL}, without using the covariates, treatments, or outcomes.
We then perform causal inference  on the records linked in the initial step, i.e., the exercise is sequential as opposed to joint. Comparisons with this model reveal if the sharing  of information between the record linkage and outcome models offers any inferential advantages.  We also consider using the known links, that is, we make causal inferences with the true links. 
Although this approach is not feasible in practice, as one does not know the true links in genuine scenarios, we consider it a benchmark for the best we can do in these simulation scenarios.

For both the joint and two-stage models, we compare performance accuracy both in terms of record linkage and causal inference. For the former, we examine the positive predictive value (PPV) and the negative predictive value (NPV), defined as follows. 
Let $\hat{\bz}$ be the posterior mode of $\bz$. The PPV is the proportion of links that are actual matches and the NPV is the proportion of non-links that are actual non-matches. Let $\mathcal{A}_{1,j}=\{\hat{z}_j=z_j, z_j\leq n_A\}$ and $\mathcal{A}_{2,j}=\{\hat{z}_j=z_j, z_j= n_A+j\}$. Let $I(\mathcal{A}_{k,j})$ be the indicator function corresponding to set $\mathcal{A}_{k,j}$, $k=1,2$; $j=1, \dots, n_B$. The PPV and NPV are defined as $\sum_{j=1}^{n_B}I(\mathcal{A}_{1,j})/\sum_{j=1}^{n_B}I(z_j\leq n_A)$ and $\sum_{j=1}^{n_B}I(\mathcal{A}_{2,j})/\sum_{j=1}^{n_B}I(z_j= n_A+j)$, respectively. A perfect record linkage procedure would result in PPV=NPV=1. 

To assess the quality of causal inference for all three methods, we use the mean squared error (MSE) of the post burn-in causal effects, i.e., MSE $=\sum_{l=1}^L (\mbox{ATEL}^{(l)}-\mbox{ATEL}_0)^2/L,$ where $\mbox{ATEL}_0$ is the value of the ATEL computed using all true links and $\mbox{ATEL}^{(l)}$ is the $l$th post burn-in estimate of ATEL. 
We also examine the posterior distributions and 95\% credible intervals of the ATEL. 


\subsection{Results}\label{no_model_mis}
 We begin with results with no missing outcomes and with correct outcome model specifications. That is, we specify parametric or semi-parametric outcome regressions that match the choices of $m_1(\cdot)$ and $m_2(\cdot)$ in the data generation models in   
 Section \ref{datgen}. We estimate the joint model and the two-stage model using four linking variables: first name, last name, birth date and birth year. 
  

Table~\ref{Tab1} summarizes the PPV and NPV for the joint model and two-stage model, averaged over $20$ replications---enough to generate sufficiently small Monte Carlo errors---when using the correct outcome model specifications. 
For both data generation schemes, both the PPV and NPV of the joint model decrease as the  percentage of overlap between File A and File B decreases. The two-stage model follows a similar pattern. 
Comparing the two models, we see that the joint model tends to have a higher PPV than the two-stage model.  The improved performance of the joint model becomes increasingly apparent as the amount of overlap decreases. 
The joint model also tends to have a higher NPV than the two-stage model, although the differences are negligible in the scenario with 10\% overlap under Scheme N. 

\begin{table}[t]
\small{
\begin{center}
\begin{tabular}
[c]{cccccccc}
\hline
True & Percentage             & PPV                           & NPV                           & PPV                           & NPV \\
Model             & of Overlap & (\emph{Joint}) & (\emph{Joint}) & (\emph{Two-Stage}) & (\emph{Two-Stage})\\
\hline
                   & 90 & $0.99$ & $0.98$ & $0.97$ &  $0.95$\\
Scheme L & 50 & $0.99$  & $0.96$ & $0.94$ &  $0.92$\\
                  & 10 & $0.91$ & $0.93$ & $0.86$ &   $0.91$\\
\hline
                          & 90 & $0.99$ & $0.96$ & $0.97$  & $0.94$\\
Scheme N & 50 & $0.97$ & $0.96$ & $0.94$ &  $0.92$\\
                          & 10 & $0.96$ & $0.91$ & $0.87$ &  $0.91$\\
\hline
\end{tabular}
\caption{Positive predictive values (PPV) and negative predictive values (NPV) for the joint model and the two-stage model for different overlap levels when using correct outcome model specifications. All Monte Carlo standard errors are 0.004 or smaller. The known link model uses the true links and hence is not included.}\label{Tab1}
\end{center}
}
\end{table}


\begin{table}[h]
\small{
\begin{center}
\begin{tabular}
[c]{cccccccc}
\hline
True & Percentage             &  Joint     & Two-Stage  & Known Link  \\
Model              & of Overlap &  Model  & Model         & Model  \\
\hline
                   & 90 &  $0.02 \,(0.001)$ & $0.14 \,(0.003)$ & $0.01 \,(0.001)$\\
Scheme L & 50 &  $0.10 \,(0.002)$ & $7.94 \,(0.290)$ & $0.09 \,(0.001)$\\
                  & 10 & $1.78 \,(0.050)$  & $12.32 \,(0.340)$ & $0.68 \,(0.030)$\\
\hline
                          & 90 & $0.01 \,(0.001)$  &  $0.03 \,(0.001)$ &  $0.01 \,(0.001)$ \\
Scheme N & 50 & $0.03 \,(0.002)$ & $0.62 \,(0.040)$ & $0.02 \,(0.001)$\\
                          & 10 & $0.37 \,(0.036)$ & $1.64 \,(0.070)$ & $0.18 \,(0.002)$ \\
\hline
\end{tabular}
\caption{MSE of estimating the true causal effect ($ATEL_0$) for the joint model, the two-stage model, and using the known links for different overlap levels when using correct outcome model specifications. Monte Carlo standard errors are in parentheses.}\label{Tab2}
\end{center}
}
\end{table}

The improvements in the linkages when using the joint model has benefits for the estimation of the causal effect. As evident in Table \ref{Tab2},   
the joint model performs significantly better on MSE than the two-stage model. The performance gap becomes more substantial as the percentage of overlap decreases, especially under Scheme L. Notably, the results from the joint model are similar to those from the gold-standard Known Link model in the 50\% and 90\% overlap scenarios.

We next examine performance when the outcome model does not exactly match the data generating model.  In particular, we fit an outcome regression that is linear in the propensity score even though the outcomes are generated using Scheme N; and, we fit an outcome regression that uses the penalized splines even though the outcomes are generated using Scheme L.  Here, we only consider the scenario with 90\% overlap, which gives both methods the best chance to perform well. 
As evident in Table~\ref{Tab_mis}, not surprisingly performances of both models deteriorate substantially compared to the results in Table~\ref{Tab1} and Table~\ref{Tab2}. We are imputing potential outcomes from mis-specified models, after all.
When fitting the semi-parametric model to data generated under Scheme L, we observe higher PPV and NPV, as well as a lower MSE, for the joint model compared to the two-stage model. This is also the case when fitting the parametric model to data generated under Scheme N; however, in this scenario the differences are practically modest.  Taken together, these results suggest that, even with model mis-specification, it may be advantageous to use the joint model over the two-stage model.

\begin{table}[th]
\small{
\begin{center}
\begin{tabular}
[c]{cc|cc|cc|cc}
\cline{1-8}
True & Fitted & \multicolumn{2}{c}{PPV} & \multicolumn{2}{c}{NPV} & \multicolumn{2}{c}{MSE}\\
\cline{3-8}
Model & Model & Joint & Two-Stage & Joint & Two-Stage & Joint & Two-Stage\\
\hline
Scheme L & Splines & $0.99$ & $0.97$ & $0.96$ & $0.94$ & $0.79\,(0.038)$ & $1.28\,(0.056)$\\
Scheme N & Linear & $0.99$ & $0.97$ & $0.97$ & $0.94$ & $0.56\,(0.020)$ & $0.62\,(0.019)$\\
\hline
\end{tabular}
\caption{PPV, NPV, and MSE for the joint model and the two-stage models under model mis-specification. Monte Carlo standard errors for MSE values are presented in parentheses.  All Monte Carlo standard errors for PPV and NPV are $0.005$ or smaller. Results based on 90\% overlap of records between File A and File B.
 }\label{Tab_mis}
\end{center}
}
\end{table}


Finally, we  examine the performance of the joint model and two-stage model in the presence of missing outcomes in File A. We blank either $5\%$ or $10\%$ of the values of $y_i$ in File A using a missing completely at random mechanism.  We examine the cases of correct model specifications with $90\%$ overlap of records between File A and File B. 
To handle the missing values in the joint model, we sample the missing observations from their posterior predictive distributions in each MCMC iteration. For the two-stage model, based on the posterior mode $\hat{z}_j$ of $z_j$, we impute the missing values from their posterior predictive distributions after fitting the outcome model on the linked dataset.    

\begin{table}[t]
\small{
\begin{center}
\begin{tabular}
[c]{cc|cc|cc|cc}
\cline{1-8}
True & Missing  & \multicolumn{2}{c}{PPV} & \multicolumn{2}{c}{NPV} & \multicolumn{2}{c}{MSE}\\
\cline{3-8}
Model  & \% & Joint & Two-Stage & Joint & Two-Stage & Joint & Two-Stage\\
\hline
Scheme L & 5\% & $0.99$ & $0.97$ & $0.97$ & $0.94$ & $0.11\,(0.004)$ & $0.16\,(0.005)$\\
Scheme L & 10\% & $0.98$ & $0.97$ & $0.97$ & $0.94$ & $0.16\,(0.005)$ & $0.19\,(0.006)$\\
\hline
Scheme N & 5\% & $0.99$ & $0.97$ & $0.97$ & $0.95$  & $0.02\,(0.001)$ & $0.05\,(0.001)$\\
Scheme N & 10\% & $0.99$  & $0.97$ & $0.96$ & $0.94$ & $0.05\,(0.001)$ & $0.07\,(0.001)$\\
\hline
\end{tabular}
\caption{PPV, NPV, and MSE for the joint and the two-stage models with 5\% and 10\% missing outcomes in File A. All Monte Carlo standard errors for PPV and NPV are $0.004$ or smaller.
Results based on 90\% overlap of records between File A and File B.}\label{Tab_missing_data}
\end{center}
}
\end{table}

Table~\ref{Tab_missing_data} summarizes results over 20 independent simulation runs.  The performance of joint model worsens as the percentage of missing data increases, although not by much in these scenarios. 
A similar trend is observed for the two-stage model. 
We continue to see advantages of the joint model over the two-stage model.

In the supplementary material, we describe results from additional simulation scenarios.  In particular, we 
find that the relative performances of the joint model and two-stage model remain qualitatively unchanged when using correlated (rather than independent) covariates. We also lower the signal to noise ratio by increasing the regression variance.  Not surprisingly, the performance gap between the joint model and the two-stage model closes as the variance increases.

\section{Causal Study of Debit Cards}\label{rda}

The past few decades have seen a steadily increasing global trend in the use of non cash payment instruments like credit, debit and prepaid cards. 
\cite{thaler1985mental} and \cite{thaler1999mental}  argue that the form of payment instruments can have a significant impact on consumer decisions via mental accounting, a set of cognitive operations used by individuals and households to keep track of financial activities. Indeed, there is evidence that consumers who have cards would spend more than ones who do not \citep{cole1998identifying}. A comprehensive causal study carried out by \cite{mercatanti2014debit} in this regard focuses on the effect of debit cards on spending. \cite{mercatanti2014debit} argue that debit cards, unlike credit cards, do not allow consumers to incorporate additional long-term sources of funds in their spending decisions, thus eliminating any confounding intertemporal reallocations of wealth from the psychological
effects on spending \citep{soman2001effects}, and hence are more appropriate to look at for this kind of a causal study.

With this background in mind, we use an observational study of the causal effect of possession of debit cards on household consumption to illustrate the Bayesian hierarchical model for causal inference and record linkage, as we now describe. 

\subsection{Data Description and Background}\label{rda1}

We use data from the Italy Survey on Household Income and Wealth (SHIW).
The SHIW is a nationally representative survey, run by the Bank of Italy once in every 2 years since 1965, with the only exception being that the 1997 survey was delayed to 1998. The purpose of this survey is to collect information on several aspects of Italian households' economic and financial behavior.  Since the data contain information related to household characteristics, spending and payment instruments, the SHIW can provide a useful opportunity to evaluate the causal effect of debit card possession on spending in Italian households.

We link two files comprising data collected during the years 1995 and 1998. A number of the same households participated in both years. In particular, our target population is the set of households having at least one current bank account but no debit cards before 1995. The treatment $w = 1$ if the household (all members combined) possesses one and only one debit card at 1998, and $w=0$ if the household does not possess any debit cards at 1998. Households with more than one debit card are excluded from our sample. Here, it may be mentioned that ideally, an analysis with units being individuals that possess debit cards should be carried out, because debit cards are typically issued to individuals. But the SHIW survey only has this information at the household level. Our strategy to limit the sample of treated units to households possessing only one debit card ensures that a possible effect on household spending will be due to a certain individual possessing this card. Though we do not have exact information on the ownership of the card, we make the (reasonable) assumption that the head of the household has possession of the sole debit card. 

The outcome on which we evaluate the treatment effect is the monthly average spending of the household on all consumer goods, measured in the latter survey (1998). For data quality control, we delete $15$ observations which have either negative values of the outcome (monthly spending) or unusually high ratios (greater than 5 and going up to 900) of monthly spending to monthly income. Upon implementing such data quality control measures, the data file corresponding to 1995 contains 589 observations with information on the treatment (debit card possession) and covariates, while the data file corresponding to 1998 (3919 observations) contains information on the outcome (monthly average household spending). 

Both files contain a common set of imperfect linking variables, including 
the geographical area of residence of the household, the number of inhabitants in the town of the household, and the gender, birth year, marital status, region of birth and highest educational qualification of the head of the household. Fortunately, we also have a unique ID that we can use to perfectly link households across years. 
We use this ID variable to assess how well our model has linked observations in the two files, based on the other imperfect linking variables noted above. Using the unique matching ID, we observe that the file contains 191 observations in the treatment group (who possess a debit card) and the other 398 observations in the control group. An initial check on the spending distribution for the treatment and control groups (see Figure~\ref{Adj1}) hints at a positive effect of  acquiring a debit card on household spending. 


The covariates (possible confounders) we consider in this study are all measured in the initial survey (1995), and consist of the monthly average spending of the household on consumer goods in the initial survey year (lagged outcome), the net wealth of the household, the household net disposable income, the monthly average cash inventory held by the household, the average interest rate and the number of banks in the municipality where the household is located. The choice of these confounders appears to satisfy the strong ignorability condition, as we discuss in the online supplement.
The lagged outcome is generally indicated in the economic literature as a fundamental confounder  \citep{angrist2009instrumental, frolich2019impact}. The cash inventory held by the household was introduced in the specific context by \cite{mercatanti2014debit}. The net wealth and the net disposable income are important indicators of the household economic condition.  The last two covariates have been suggested by \cite{attanasio2002demand}, who have shown in non-causal contexts that the interest rate and the number of banks in the municipality where the household lives had a significant contribution to the probability of acquiring a debit card in Italy. Moreover, the number of banks is a good indicator of the size of the municipality. 

\subsection{Results}\label{rda2}
We implement the joint model with the semi-parametric outcome regression specification discussed in Section \ref{sec:causla_model}. 
We also include the two-stage model and the results using the known-links for comparisons. 
We use the same prior hyperparameter values as in the simulation studies; moderate perturbation of them leads to practically indistinguishable results. We let the MCMC chain run for 2000 iterations and discard the first 1500 as burn-in, and draw inferences on both the ATEL and record linkage based on the post burn-in iterates. 

\begin{table}[t]
\begin{center}
\begin{tabular}
[c]{l|r|r|rrr}
\cline{1-6}
Fitted Model & PPV & NPV & \multicolumn{3}{c}{ATEL}\\
\cline{4-6}
 &  &  & 2.5\% & 50\% & 97.5\% \\
\hline
Known-Link & -- & -- & 127.49 & 233.14  & 334.67 \\
Joint & 0.876 & 0.979 & 108.46 & 258.57  & 412.44  \\
Two-Stage & 0.847 & 0.881 & 84.08 & 193.28 & 306.16 \\
\hline
\end{tabular}
\caption{PPV and NPV for linking the 1995 and 1998 files in the SHIW causal study. Also included are the 2.5\%, 50\% and 97.5\% quantiles of the posterior distribution of the ATEL (in thousand Italian Liras) for all methods.}\label{Tab_real_data}
\end{center}
\end{table}

Table \ref{Tab_real_data} presents the PPV and NPV values, along with the posterior median and 95\% credible intervals of the estimated ATEL (in thousand Italian Liras) for all models. Consistent with the simulation results, the joint model offers a noticeably better PPV and NPV than the two-stage model. 
Using the results from the known-links as a benchmark, we find that the posterior inferences for the joint model seem more plausible than those from the two-stage model.  First, the posterior medians for the joint model and the known link model are more similar to one another than are the the posterior medians for the known-link model and two-stage model.  Second, the 95\% credible interval for the joint model is wider than the interval for the known-links model, which is sensible in that it reflects additional uncertainty from imperfect linkages.  On the other hand, the 95\% credible interval from the two-stage model actually is practically the same length as the interval for the known-links model, effectively portraying no propagation of uncertainty from imprecise linkages.  


Figure~\ref{Adj2} displays the posterior distributions of the ATEL.  The results suggest that, on average, the effect of possession of a single debit card for a household leads to more monthly consumption than households that do not  possess any debit card during the study period. 
Our analysis largely eliminates any potential confounding effect of intertemporal reallocation of wealth, since debit cards do not allow for long-term fund sources \citep{soman2002effect}. Hence, the significant estimated effects of debit card possession on spending may be attributed to psychological reasons (increased perceived amount of money) \citep{soman2001effects} and easier accessibility to financial resources \citep{morewedge2007unfixed}. The estimated ATEL is higher than the ATT (the Average Treatment effect on the Treated) ($\sim$ 200 thousand Italian Liras), in \cite{mercatanti2014debit}. This result is interpretable in the light of some recent economic models for the use of debit cards (e.g., see \citealp{kim2010modelcard} and references therein), which imply that the poor adopt  debit cards later than the rest of the population.
This is confirmed by \cite{mercatanti2014debit} who show that households with debit cards generally have higher levels of income, wealth and education of the members in comparison with households without debit cards. Therefore, our estimated ATEL values indicate larger psychological  effects on spending for people in disadvantageous social and economic conditions.


\begin{figure}
    \begin{center}
 \subfigure[]{\includegraphics[width=9cm,height=7cm]{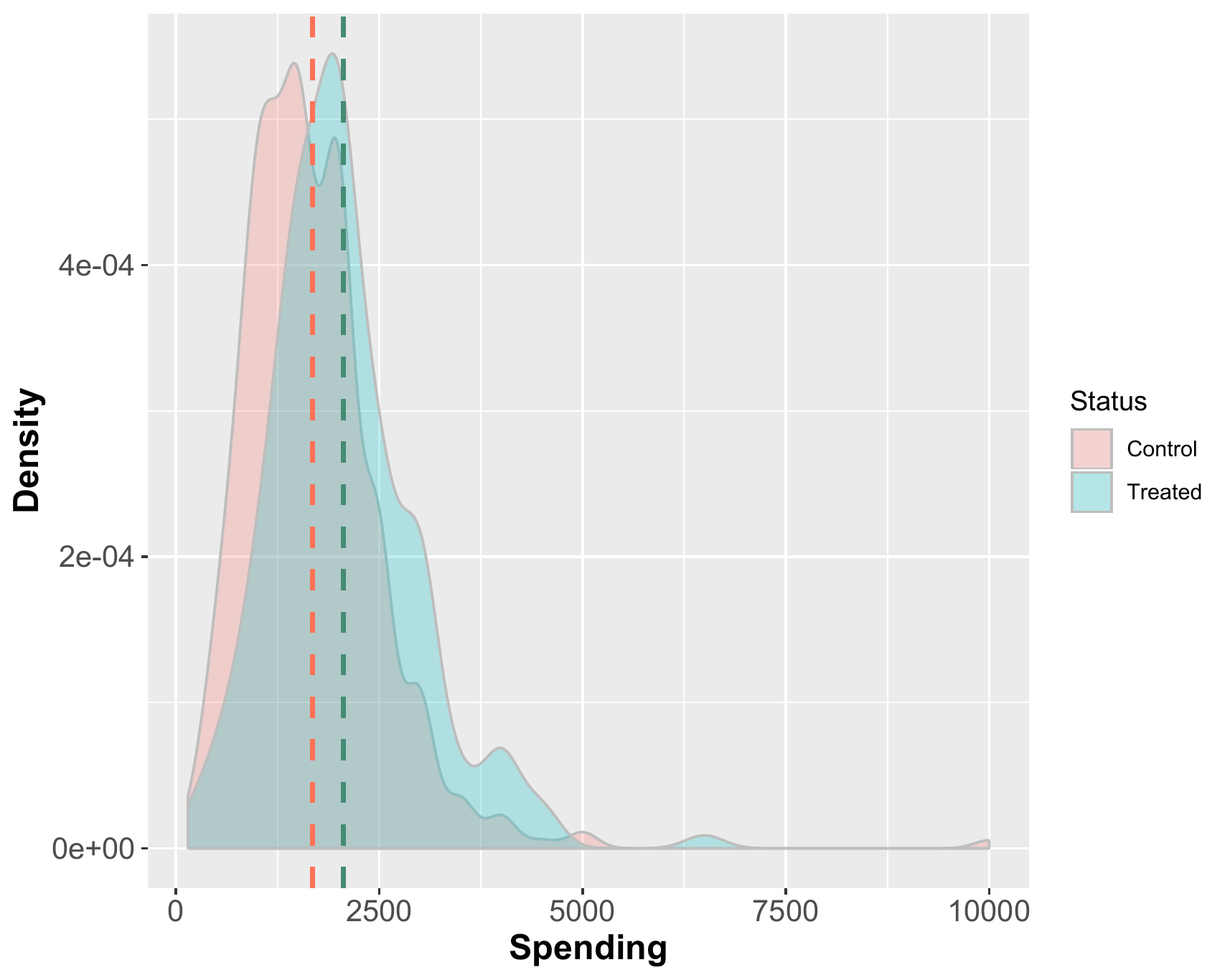}\label{Adj1}}
   \subfigure[]{\includegraphics[width=8cm,height=8cm]{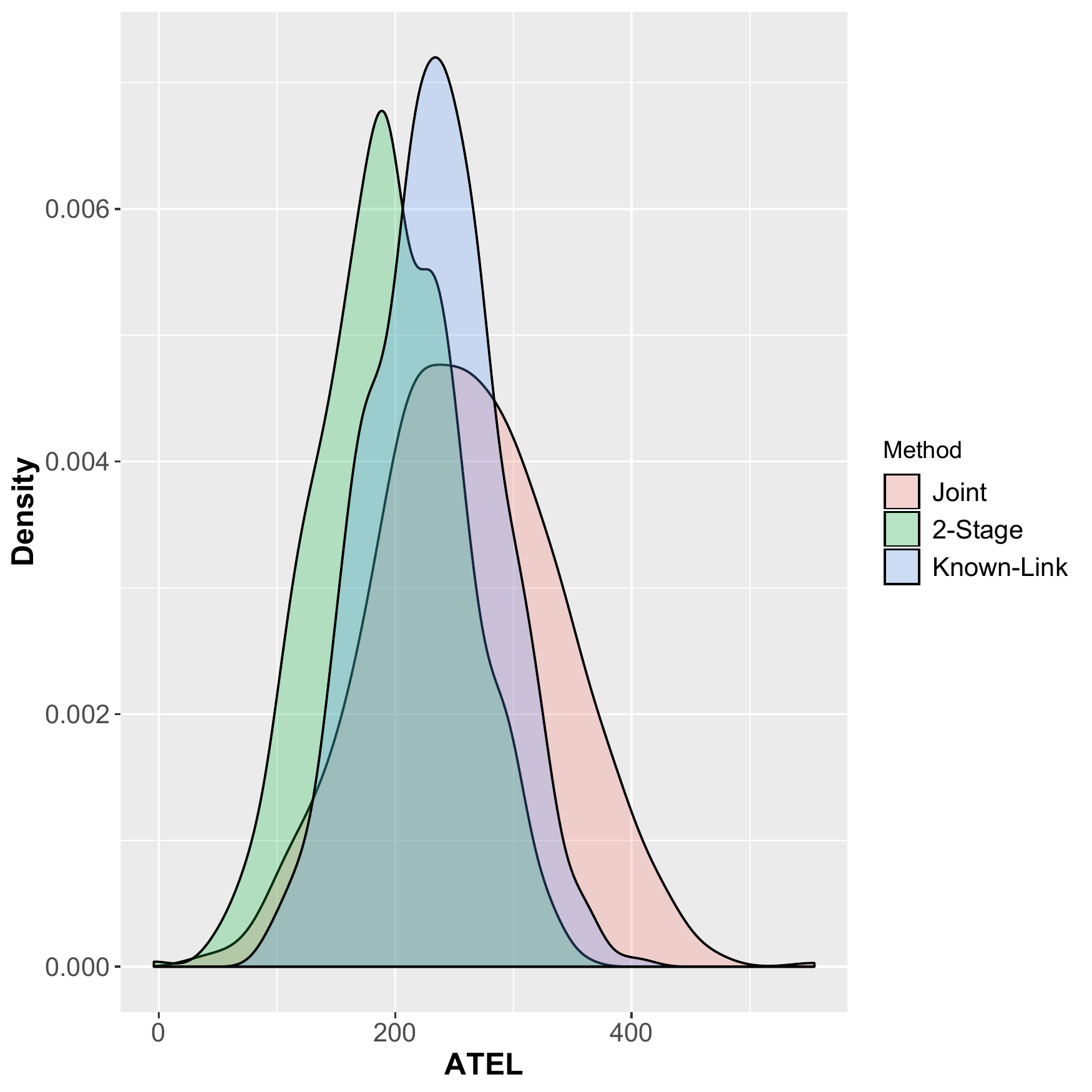}\label{Adj2}}
     \end{center}
   \caption{Figure \ref{Adj1} shows the average per month spending distribution of the treatment and control groups. The vertical lines indicate the means of the two distributions. Figure \ref{Adj2} shows the distribution of the ATEL for the joint, two-stage and the known-link models. The numbers are in per thousand Italian Liras.} \label{fig:ATE_spending}
 \end{figure}

\section{Discussion and Future Work}\label{conclusion}

The Bayesian approach to causal inference and record linkage offers interesting future directions. For example, many data applications have predictors and treatment status residing in different files. This requires significant modifications of the approach presented here, as one needs a model for the covariates as well as the outcomes. Another important future direction is to extend this approach to other flexible outcome models.

We conclude with a connection to the philosophy of causal inference. Performing causal inference and record linkage simultaneously allows the values of the outcome variables to influence which records are used in the causal estimator.  This is in conflict with the often followed advice that the design of the observational study should proceed separately from the analysis \citep{imbens2015causal}. As suggested by \cite{wortman2018simultaneous}, if one seeks the potential gains in accuracy from using the relationships among the variables, this is the price to pay for working with imperfect linkages.



\bibliographystyle{natbib}
\bibliography{reference}

\begin{thebibliography}{}

\bibitem[Angrist and Pischke(2009)Angrist and Pischke]{angrist2009instrumental}
Angrist, J.~D. and Pischke, J. (2009).
\newblock Instrumental variables in action: sometimes you get what you need.
  mostly harmless econometrics: an empiricist’s companion.

\bibitem[Attanasio {\em et~al.}(2002)Attanasio, Guiso, and
  Jappelli]{attanasio2002demand}
Attanasio, O.~P., Guiso, L., and Jappelli, T. (2002).
\newblock The demand for money, financial innovation, and the welfare cost of
  inflation: An analysis with household data.
\newblock {\em Journal of Political Economy\/}, {\bf 110}(2), 317--351.

\bibitem[Belin and Rubin(1995)Belin and Rubin]{belin1995method}
Belin, T.~R. and Rubin, D.~B. (1995).
\newblock A method for calibrating false-match rates in record linkage.
\newblock {\em Journal of the American Statistical Association\/}, {\bf
  90}(430), 694--707.

\bibitem[Chipperfield {\em et~al.}(2011)Chipperfield, Bishop, Campbell, {\em
  et~al.}]{chipperfield2011maximum}
Chipperfield, J.~O., Bishop, G., Campbell, P.~D., {\em et~al.} (2011).
\newblock Maximum likelihood estimation for contingency tables and logistic
  regression with incorrectly linked data.

\bibitem[Cole(1998)Cole]{cole1998identifying}
Cole, C. (1998).
\newblock Identifying interventions to reduce credit card misuse through
  consumer behavior research.
\newblock In {\em Proceedings of the Marketing and Public Policy Conference\/},
  pages 11--13. Washington, DC: Georgetown University Press.

\bibitem[Dalzell and Reiter(2018)Dalzell and Reiter]{dalzell2018regression}
Dalzell, N.~M. and Reiter, J.~P. (2018).
\newblock Regression modeling and file matching using possibly erroneous
  matching variables.
\newblock {\em Journal of Computational and Graphical Statistics\/}, {\bf
  27}(4), 728--738.

\bibitem[Ding {\em et~al.}(2018)Ding, Li, {\em et~al.}]{ding2018causal}
Ding, P., Li, F., {\em et~al.} (2018).
\newblock Causal inference: A missing data perspective.
\newblock {\em Statistical Science\/}, {\bf 33}(2), 214--237.

\bibitem[Domingo-Ferrer(2011)Domingo-Ferrer]{domingo2011privacy}
Domingo-Ferrer, J. (2011).
\newblock {\em Privacy in statistical databases\/}.
\newblock Springer.

\bibitem[Fellegi and Sunter(1969)Fellegi and Sunter]{fellegi1969theory}
Fellegi, I.~P. and Sunter, A.~B. (1969).
\newblock A theory for record linkage.
\newblock {\em Journal of the American Statistical Association\/}, {\bf
  64}(328), 1183--1210.

\bibitem[Fortini {\em et~al.}(2002)Fortini, Nuccitelli, Liseo, and
  Scanu]{fortini2002modelling}
Fortini, M., Nuccitelli, A., Liseo, B., and Scanu, M. (2002).
\newblock Modelling issues in record linkage: a bayesian perspective.
\newblock In {\em Proceedings of the American Statistical Association, Survey
  Research Methods Section\/}, pages 1008--1013.

\bibitem[Fr{\"o}lich and Sperlich(2019)Fr{\"o}lich and
  Sperlich]{frolich2019impact}
Fr{\"o}lich, M. and Sperlich, S. (2019).
\newblock {\em Impact evaluation\/}.
\newblock Cambridge University Press.

\bibitem[Gutman {\em et~al.}(2013)Gutman, Afendulis, and
  Zaslavsky]{gutman2013bayesian}
Gutman, R., Afendulis, C.~C., and Zaslavsky, A.~M. (2013).
\newblock A bayesian procedure for file linking to analyze end-of-life medical
  costs.
\newblock {\em Journal of the American Statistical Association\/}, {\bf
  108}(501), 34--47.

\bibitem[Herzog {\em et~al.}(2007)Herzog, Scheuren, and
  Winkler]{herzog2007data}
Herzog, T.~N., Scheuren, F.~J., and Winkler, W.~E. (2007).
\newblock {\em Data quality and record linkage techniques\/}.
\newblock Springer Science \& Business Media.

\bibitem[Hill(2011)Hill]{hill2011bayesian}
Hill, J.~L. (2011).
\newblock Bayesian nonparametric modeling for causal inference.
\newblock {\em Journal of Computational and Graphical Statistics\/}, {\bf
  20}(1), 217--240.

\bibitem[Imbens and Rubin(2015)Imbens and Rubin]{imbens2015causal}
Imbens, G.~W. and Rubin, D.~B. (2015).
\newblock {\em Causal inference in statistics, social, and biomedical
  sciences\/}.
\newblock Cambridge University Press.

\bibitem[Jaro(1989)Jaro]{jaro1989advances}
Jaro, M.~A. (1989).
\newblock Advances in record-linkage methodology as applied to matching the
  1985 census of tampa, florida.
\newblock {\em Journal of the American Statistical Association\/}, {\bf
  84}(406), 414--420.

\bibitem[Kim and Lee(2010)Kim and Lee]{kim2010modelcard}
Kim, Y. and Lee, M. (2010).
\newblock A model of debit card as a means of payment.
\newblock {\em J. Econ. Dyn. Control\/}, {\bf 34}, 1359--1368.

\bibitem[Lahiri and Larsen(2005)Lahiri and Larsen]{lahiri2005regression}
Lahiri, P. and Larsen, M.~D. (2005).
\newblock Regression analysis with linked data.
\newblock {\em Journal of the American statistical association\/}, {\bf
  100}(469), 222--230.

\bibitem[Larsen(2010)Larsen]{larsen2010record}
Larsen, M.~D. (2010).
\newblock Record linkage modeling in federal statistical databases.
\newblock In {\em FCSM Research Conference\/}.

\bibitem[Larsen and Rubin(2001)Larsen and Rubin]{larsen2001iterative}
Larsen, M.~D. and Rubin, D.~B. (2001).
\newblock Iterative automated record linkage using mixture models.
\newblock {\em Journal of the American Statistical Association\/}, {\bf
  96}(453), 32--41.

\bibitem[Mercatanti {\em et~al.}(2014)Mercatanti, Li, {\em
  et~al.}]{mercatanti2014debit}
Mercatanti, A., Li, F., {\em et~al.} (2014).
\newblock Do debit cards increase household spending? evidence from a
  semiparametric causal analysis of a survey.
\newblock {\em The Annals of Applied Statistics\/}, {\bf 8}(4), 2485--2508.

\bibitem[Morewedge {\em et~al.}(2007)Morewedge, Holtzman, and
  Epley]{morewedge2007unfixed}
Morewedge, C.~K., Holtzman, L., and Epley, N. (2007).
\newblock Unfixed resources: Perceived costs, consumption, and the accessible
  account effect.
\newblock {\em Journal of Consumer Research\/}, {\bf 34}(4), 459--467.

\bibitem[Park and Casella(2008)Park and Casella]{park2008bayesian}
Park, T. and Casella, G. (2008).
\newblock The {B}ayesian lasso.
\newblock {\em Journal of the {A}merican {S}tatistical {A}ssociation\/}, {\bf
  103}(482), 681--686.

\bibitem[Rosenbaum and Rubin(1983)Rosenbaum and Rubin]{rosenbaum1983central}
Rosenbaum, P.~R. and Rubin, D.~B. (1983).
\newblock The central role of the propensity score in observational studies for
  causal effects.
\newblock {\em Biometrika\/}, {\bf 70}(1), 41--55.

\bibitem[Rubin(1974)Rubin]{rubin1974estimating}
Rubin, D.~B. (1974).
\newblock Estimating causal effects of treatments in randomized and
  nonrandomized studies.
\newblock {\em Journal of educational Psychology\/}, {\bf 66}(5), 688.

\bibitem[Rubin(2005)Rubin]{rubin2005bayesian}
Rubin, D.~B. (2005).
\newblock Bayesian inference for causal effects.
\newblock {\em Handbook of statistics\/}, {\bf 25}, 1--16.

\bibitem[Ruppert {\em et~al.}(2003)Ruppert, Wand, and
  Carroll]{ruppert2003semiparametric}
Ruppert, D., Wand, M.~P., and Carroll, R.~J. (2003).
\newblock {\em Semiparametric regression\/}.
\newblock Number~12. Cambridge university press.

\bibitem[Sadinle(2017)Sadinle]{sadinle2017bayesian}
Sadinle, M. (2017).
\newblock Bayesian estimation of bipartite matchings for record linkage.
\newblock {\em Journal of the American Statistical Association\/}, {\bf
  112}(518), 600--612.

\bibitem[Sadinle {\em et~al.}(2018)Sadinle {\em et~al.}]{sadinle2018bayesian}
Sadinle, M. {\em et~al.} (2018).
\newblock Bayesian propagation of record linkage uncertainty into population
  size estimation of human rights violations.
\newblock {\em The Annals of Applied Statistics\/}, {\bf 12}(2), 1013--1038.

\bibitem[Sariyar and Borg(2010)Sariyar and Borg]{sariyar2010recordlinkage}
Sariyar, M. and Borg, A. (2010).
\newblock The recordlinkage package: Detecting errors in data.
\newblock {\em The R Journal\/}, {\bf 2}(2), 61--67.

\bibitem[Scheuren and Winkler(1991)Scheuren and
  Winkler]{scheuren1991regression}
Scheuren, F. and Winkler, W.~E. (1991).
\newblock Regression analysis of data files that are computer matched.

\bibitem[Solomon and O'Brien(2019)Solomon and O'Brien]{solomon2019}
Solomon, N.~C. and O'Brien, S.~M. (2019).
\newblock A framework for decision threshold selection in record linkage.

\bibitem[Soman(2001)Soman]{soman2001effects}
Soman, D. (2001).
\newblock Effects of payment mechanism on spending behavior: The role of
  rehearsal and immediacy of payments.
\newblock {\em Journal of Consumer Research\/}, {\bf 27}(4), 460--474.

\bibitem[Soman and Cheema(2002)Soman and Cheema]{soman2002effect}
Soman, D. and Cheema, A. (2002).
\newblock The effect of credit on spending decisions: The role of the credit
  limit and credibility.
\newblock {\em Marketing Science\/}, {\bf 21}(1), 32--53.

\bibitem[Stuart(2010)Stuart]{stuart2010matching}
Stuart, E.~A. (2010).
\newblock Matching methods for causal inference: A review and a look forward.
\newblock {\em Statistical science: a review journal of the Institute of
  Mathematical Statistics\/}, {\bf 25}(1), 1.

\bibitem[Tancredi and Liseo(2011)Tancredi and Liseo]{tancredi2011hierarchical}
Tancredi, A. and Liseo, B. (2011).
\newblock A hierarchical bayesian approach to record linkage and population
  size problems.
\newblock {\em The Annals of Applied Statistics\/}, {\bf 5}(2B), 1553--1585.

\bibitem[Tancredi {\em et~al.}(2018)Tancredi, Steorts, Liseo, {\em
  et~al.}]{tancredi2018unified}
Tancredi, A., Steorts, R., Liseo, B., {\em et~al.} (2018).
\newblock A unified framework for de-duplication and population size
  estimation.
\newblock {\em Bayesian Analysis\/}.

\bibitem[Thaler(1985)Thaler]{thaler1985mental}
Thaler, R. (1985).
\newblock Mental accounting and consumer choice.
\newblock {\em Marketing science\/}, {\bf 4}(3), 199--214.

\bibitem[Thaler(1999)Thaler]{thaler1999mental}
Thaler, R.~H. (1999).
\newblock Mental accounting matters.
\newblock {\em Journal of Behavioral decision making\/}, {\bf 12}(3), 183--206.

\bibitem[Winkler(1990)Winkler]{winkler1990string}
Winkler, W.~E. (1990).
\newblock String comparator metrics and enhanced decision rules in the
  fellegi-sunter model of record linkage.

\bibitem[Winkler(1993)Winkler]{winkler1993improved}
Winkler, W.~E. (1993).
\newblock {\em Improved decision rules in the Fellegi-Sunter model of record
  linkage\/}.
\newblock Citeseer.

\bibitem[Wortman and Reiter(2018)Wortman and Reiter]{wortman2018simultaneous}
Wortman, J.~H. and Reiter, J.~P. (2018).
\newblock Simultaneous record linkage and causal inference with propensity
  score subclassification.
\newblock {\em Statistics in Medicine\/}, {\bf 37}(24), 3533--3546.

\bibitem[Zheng and Little(2005)Zheng and Little]{zheng2005inference}
Zheng, H. and Little, J. (2005).
\newblock Inference for the population total from
  probability-proportional-to-size samples based on predictions from a
  penalized spline nonparametric model.
\newblock {\em Journal of Official Statistics\/}, {\bf 21}(1), 1.

\bibitem[Zheng and Little(2003)Zheng and Little]{zheng2003penalized}
Zheng, H. and Little, R.~J. (2003).
\newblock Penalized spline model-based estimation of the finite populations
  total from probability-proportional-to-size samples.
\newblock {\em Journal of Official Statistics\/}, {\bf 19}(2), 99.

\end{thebibliography}


\end{document}